\date{26 May (8 June) 2015}
\author{Theodore~Voronov}
\address{School of Mathematics, University of Manchester, Manchester, M60 1QD, UK\\
{\hphantom{hh}Dept. of Quantum Field Theory, Tomsk State University, Tomsk, 634050, Russia}}
\email{theodore.voronov@manchester.ac.uk}
\title[Thick morphisms   and oscillatory integral operators]{Thick morphisms of supermanifolds and oscillatory integral operators}
\newtheorem{theorem}{Theorem}
\theoremstyle{definition}
\newtheorem*{definition}{Definition}%[section]
\def\co{\colon\thinspace}
\newcommand{\der}[2]{{\frac{\partial {#1}}{\partial {#2}}}}
\newcommand{\fun}{C^{\infty}}
\newcommand{\e}{\varepsilon}
\newcommand{\tto}{{\linethickness{2pt}
		  \,\begin{picture}(1,0)
                   \put(0,0.26){\line(1,0){0.95}}
                   \put(0,0){$\boldsymbol{\rightarrow}$}
                   %\put(0,0){$\hphantom{\boldsymbol{\to}}$}
                  \end{picture}
                  }\,
}
\newcommand{\ttoq}{\tto_q}
\begin{document}
\begin{abstract} We show that thick morphisms (or microformal morphisms) between smooth (super)manifolds, introduced by us before, are classical limits of `quantum thick morphisms' defined here as particular oscillatory integral operators on functions.
\end{abstract}

\maketitle

In~\cite{tv:nonlinearpullback,tv:microformal} we introduced nonlinear pullbacks of functions  with respect to `microformal' or `thick' morphisms of   (super)manifolds, which generalize ordinary smooth maps. By definition, such a morphism is a formal canonical relation between the cotangent bundles of a special kind, namely, specified by a generating function depending on position coordinates on the source and momentum coordinates on the target. This function is seen as a power expansion near the zero section. Thick morphisms form a formal category; that means that the composition law for the generating functions is a formal power series. Likewise, the pullback of a function w.r.t. a thick morphism is given by a formal power series whose terms are nonlinear differential operators. There is a parallel construction based on anticotangent bundles yielding nonlinear pullbacks of odd functions (the former construction applies to even functions). Our main application was to   $L_{\infty}$-morphisms between homotopy Schouten or Poisson algebras of functions. Another application was the construction of an `adjoint operator' for nonlinear maps of vector bundles. (Thick morphisms in the even version are close to symplectic micromorphisms of Cattaneo--Dherin--Weinstein, see~\cite{cattaneo-dherin-weinstein:one} and subsequent works,  defined as germs of canonical relations between germs of symplectic manifolds at Lagrangian submanifolds; analogs of our pullbacks do not arise in such a setting. See further discussion of this in~\cite{tv:microformal}.)

We show here that thick morphisms of microformal geometry can be seen as the classical limit of certain `quantum thick morphisms', which are given by oscillatory integral operators of a particular kind. Let us point out that oscillatory integral operators (and Fourier integral operators) are well known, as well as well known is their connection with canonical relations between cotangent bundles. Roughly, each such relation defines a class of Fourier integral operators (see, e.g., \cite{guillemin-sternberg:semiclassical}). We, however, consider a very special integral operator in this class, generalizing the operator of pullback w.r.t. a smooth map. It is defined by a `quantum' version of a generating function specifying a thick morphism. `Quantum' here mean   depending on $\hbar$. The action of such operators on oscillatory wave functions in the classical limit exactly reproduces the nonlinear pullback of~\cite{tv:nonlinearpullback,tv:microformal}. The same holds true for the composition of our operators: in the classical limit it reduces to the composition of thick morphisms.

We have departed from the following observation. Consider thick morphisms of a (super)manifold $M$ to itself; and let them be invertible. We arrive at a formal (super)group of `thick diffeomorphisms' of $M$. What is its Lie (super)algebra? 
\begin{theorem} The Lie superalgebra of the formal supergroup of the  thick diffeomorphisms  of a supermanifold $M$ can be identified with the Lie superalgebra $\fun(T^*M)$ w.r.t. the canonical Poisson bracket. The infinitesimal action of $\fun(T^*M)$ (corresponding to the nonlinear pullbacks) is the (nonlinear) `Hamilton--Jacobi action' $f(x)\mapsto f(x)+\e H(x,\der{f}{x})$ on even functions.
\end{theorem}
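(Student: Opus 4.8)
The plan is to realize the formal supergroup faithfully by its action on $\fun(M)$ through nonlinear pullbacks, to read off the infinitesimal generators of this action, and to identify the resulting bracket with the canonical Poisson bracket by a direct computation on the (formal) manifold of functions. First I would recall from~\cite{tv:microformal} that a thick self-morphism of $M$ is given by a generating function $S(x,p)$, with $x$ position coordinates on the source copy of $M$ and $p$ momentum coordinates on the target copy, taken as a formal expansion near the zero section; thus $S$ is naturally a function on $T^*M$, the identity thick diffeomorphism corresponding to $S_0=x^ap_a$ and a thick diffeomorphism close to it to $S_{\e}(x,p)=x^ap_a+\e H(x,p)+O(\e^2)$ for an arbitrary $H\in\fun(T^*M)$. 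The tangent space at the identity of the formal supergroup of thick diffeomorphisms is precisely the space of such first-order deformations $\e H\leftrightarrow H$, so the underlying vector space of its Lie superalgebra is identified with $\fun(T^*M)$ straight away; what remains is to determine the bracket and to make the infinitesimal action explicit.

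Next I would compute the infinitesimal action. Substituting $S_{\e}$ into the pullback formula of~\cite{tv:microformal} --- which reads $\Phi^*[g](x)=g(y)+S(x,p)-y^ap_a$ with the intermediate data determined by $p_a=\der{g}{y^a}$ and $y^a=\lder{S}{p_a}$ --- and expanding to first order in $\e$: one has $y^a-x^a=\e\,\lder{H}{p_a}+O(\e^2)$ and $p_a=\der{g}{x^a}+O(\e)$, so the term $(y^a-x^a)p_a$ produced by Taylor-expanding $g(y)$ about $x$ cancels the term $-(y^a-x^a)p_a$, leaving
\[
\Phi_{\e}^*[g](x)=g(x)+\e\,H\!\left(x,\der{g}{x}\right)+O(\e^2).
\]
This is exactly the Hamilton--Jacobi action, which proves the second assertion of the theorem; equivalently, the generator attached to $H$ is the (formal) vector field $V_H$ on $\fun(M)$ with $V_H|_g=H(x,\der{g}{x})$, an element of $T_g\fun(M)=\fun(M)$. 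The map $H\mapsto V_H$ is injective --- choosing $g$ whose differential realizes a prescribed covector at a point shows that $V_H=0$ forces $H=0$ --- and every $H\in\fun(T^*M)$ does occur, since $S_{\e}=x^ap_a+\e H$ is a legitimate, formally invertible generating function; hence $H\mapsto V_H$ embeds the Lie superalgebra of the formal supergroup into $\Vect(\fun(M))$.

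Finally I would compute $[V_H,V_G]$. Using $DV_G|_g(\psi)=\lder{G}{p_a}(x,\der{g}{x})\cdot\der{\psi}{x^a}$, the commutator $DV_G|_g(V_H|_g)-DV_H|_g(V_G|_g)$ splits into terms carrying a single $x$-derivative of $H$ or of $G$, which assemble into $\pm\{H,G\}(x,\der{g}{x})$ for the canonical Poisson bracket of $T^*M$, and terms carrying the Hessian $\partial^2 g/\partial x^a\partial x^b$, whose coefficient $\lder{G}{p_a}\,\lder{H}{p_b}-(\pm)\,\lder{H}{p_a}\,\lder{G}{p_b}$ is graded-antisymmetric in $(a,b)$ and is therefore annihilated against the graded-symmetric Hessian of the even function $g$. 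Hence $[V_H,V_G]=V_{\pm\{H,G\}}$, so $H\mapsto V_H$ intertwines the Poisson bracket with the commutator of vector fields, up to the overall sign fixed by the conventions chosen for the pullback action, for $\{\cdot,\cdot\}$, and for the Lie bracket; combined with the preceding paragraph this identifies the Lie superalgebra of the formal supergroup with $(\fun(T^*M),\{\cdot,\cdot\})$. The one genuine obstacle I anticipate is this last cancellation and the sign bookkeeping around it: in the $\Z$-graded setting one must verify that the second derivatives of an even function are graded-symmetric while the coefficient in question is graded-antisymmetric, which is precisely what makes the commutator of two Hamilton--Jacobi deformations again of Hamilton--Jacobi type and so guarantees that the bracket closes on $\fun(T^*M)$. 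As an independent check one can expand the composition law for generating functions from~\cite{tv:microformal} to second order in the two parameters and confirm that its antisymmetric part reproduces the same Poisson bracket.
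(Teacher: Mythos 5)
Your central computation is exactly the paper's: the paper's proof consists of substituting $S(x,p)=x^ap_a+\e H(x,p)$ into the pullback formula $f(y)+S(x,p)-y^ap_a$ with $y^a=\lder{S}{p_a}$, $p_a=\lder{f}{x^a}(y)$, observing the cancellation of $(y^a-x^a)p_a$, and concluding $f\mapsto f+\e H(x,\lder{f}{x})$. So your second paragraph reproduces the paper's argument verbatim in substance. The difference is that the paper explicitly restricts itself to ``checking the last statement'' and leaves the identification of the bracket with the canonical Poisson bracket unproved, whereas you go on to compute $[V_H,V_G]$ directly on the manifold of functions and verify that the Hessian terms cancel against the graded-antisymmetric coefficient, so that the bracket closes and equals $\pm\{H,G\}$. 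That supplementary computation is correct (the cancellation of $\bigl(\lder{G}{p_a}\,\lder{H}{p_b}-(\pm)\lder{H}{p_a}\,\lder{G}{p_b}\bigr)\,\partial^2 g/\partial x^a\partial x^b$ is exactly the right mechanism, and your caution about graded symmetry of the Hessian of an even function is the right thing to worry about), and it supplies the part of the theorem the paper takes for granted; your suggested cross-check against the second-order expansion of the composition law for generating functions is also a sensible independent confirmation. In short: same route as the paper for what the paper proves, plus a legitimate completion of the part it omits.
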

\begin{proof}
We check the last statement. The generating function of a thick morphism $\Phi\co M\tto M$ close to the identity is given by $S(x,p)=x^ap_a+\e H(x,p)$. The pullback $\Phi^*$ sends $f(x)$ to $f(y)+S(x,p)-y^ap_a=f(y)+\e H(x,p)$, where $y$ and $p$ are to be found from $y^a=\der{S}{p_a}(x,p)=x^a+\e \der{H}{p_a}(x,p)$, $p_a=\der{f}{x^a}(y)$, hence $f(x)\mapsto f(x)+\e H(x,\der{f}{x})$ as claimed.
\end{proof}
We can now see two things. First, as $\fun(T^*M)$ is the Lie algebra of the group of canonical transformations of $T^*M$, this group acts, at least infinitesimally, on even functions on $M$. This is reminiscent of the spinor representation of the orthogonal and symplectic groups, and we expect the existence of a link with it. (We also see that the formal supergroups of thick diffeomorphisms of $M$ and formal canonical transformations of $T^*M$ should be isomorphic.) Secondly, as the Hamilton--Jacobi equation is the classical analog of the Schr\"{o}dinger equation, there should be a `Schr\"{o}dinger' or `quantum' version of thick morphisms and their action on functions. This is exactly given by the oscillatory integral operators considered below.

\begin{definition} A \emph{quantum thick morphism} $\hat\Phi\co M_1\ttoq M_2$ is given by an integral operator (a `quantum pullback') sending functions on $M_2$ to functions on $M_1$  and denoted $\hat\Phi^*$ by the formula
\begin{equation}
    (\hat\Phi^* w)(x)=\frac{1}{(2\pi\hbar)^{n_2}} \int_{T^*M_2} DyDq \,\, e^{\frac{i}{\hbar}(S_{\hbar}(x,q)-y^iq_i)}\,w(y)\,.
\end{equation}
Here $n_2=\dim M_2$. (For simplicity of notation, the formula is written for ordinary manifolds; modification for the supercase is obvious.)
\end{definition}

The function $S_{\hbar}(x,q)$ in this formula is analogous to the generating function $S(x,q)$ specifying thick morphisms~\cite{tv:nonlinearpullback,tv:microformal}. It a power series in $q_i$ and in $\hbar$. 

\begin{theorem} On the phases of the oscillatory wave functions, the quantum pullback $\hat\Phi^*$ in the classical limit $\hbar\to 0$, induces the nonlinear pullback $\Phi^*$ as defined in~\cite{tv:nonlinearpullback,tv:microformal} w.r.t. the thick morphism specified by the generating function $S(x,q)=S_0(x,q)$,
\end{theorem}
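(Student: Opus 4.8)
The plan is to evaluate the oscillatory integral defining $\hat\Phi^* w$ by the stationary phase method as $\hbar\to 0$, applied to an oscillatory wave function $w(y)=e^{\frac{i}{\hbar}\f(y)}b(y)$ (or more generally $w(y)=e^{\frac{i}{\hbar}\f(y)}b(y,\hbar)$ with $b$ a formal symbol), and to track what happens to the phase. Substituting this $w$ into the definition, the full phase of the integrand becomes
\begin{equation*}
    \Psi(x;y,q)=S_{\hbar}(x,q)-y^iq_i+\f(y)\,,
\end{equation*}
and the output is $e^{\frac{i}{\hbar}\text{(critical value)}}\times(\text{amplitude})$ up to $O(\hbar)$ corrections absorbed into lower-order terms of the symbol. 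So the first step is to write down the stationarity equations $\partial\Psi/\partial q_i=0$ and $\partial\Psi/\partial y^i=0$, which give
\begin{equation*}
    y^i=\der{S_{\hbar}}{q_i}(x,q)\,,\qquad q_i=\der{\f}{y^i}(y)\,,
\end{equation*}
exactly the relations defining the nonlinear pullback in~\cite{tv:nonlinearpullback,tv:microformal}, except that $S_\hbar$ appears in place of $S$. Setting $\hbar=0$ in these equations replaces $S_\hbar$ by $S_0=S$, so the critical point $(y(x),q(x))$ converges as $\hbar\to 0$ to the solution of the classical system, and the critical value converges to
\begin{equation*}
    \Psi\big|_{\mathrm{crit}}\ \xrightarrow[\hbar\to 0]{}\ S(x,q(x))-y(x)^iq_i(x)+\f(y(x))\,,
\end{equation*}
which is precisely the formula for $(\Phi^*\f)(x)$. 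That is the content to be verified.

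The key steps, in order, are: (i) check that for $\f$ whose differential lies near the zero section — i.e. in the formal/microformal regime where the generating-function constructions of~\cite{tv:microformal} are defined — the system $y=\partial S_\hbar/\partial q$, $q=\partial\f/\partial y$ has a unique formal solution $(y(x,\hbar),q(x,\hbar))$, obtained by iteration from $q=\der{\f}{x}$, $y=x$ order by order in the expansion parameter; the non-degeneracy of the $q$-Hessian of $\Psi$ is automatic because the leading term $S_0(x,q)=x^iq_i$ makes $\partial^2\Psi/\partial y\partial q=-\mathrm{Id}+O(\text{higher order})$, so stationary phase applies and the Hessian determinant contributes only to the amplitude, not the phase. (ii) Expand the critical value as a formal power series and identify it, at $\hbar=0$, with the series defining $\Phi^*$ in~\cite{tv:nonlinearpullback,tv:microformal}; this is the term-by-term matching of two formal power series built by the same recursion, hence really a bookkeeping check. (iii) Note that the prefactor $1/(2\pi\hbar)^{n_2}$ and the Gaussian normalization from stationary phase are precisely arranged so that the leading amplitude is $b(y(x))$ (times $1$, since the model integral $\int DyDq\, e^{\frac{i}{\hbar}(x\cdot q-y\cdot q)}$ reproduces $\delta$-like evaluation $w\mapsto w(x)$), confirming that when $S_\hbar=x^iq_i$ exactly the operator is the identity, consistent with the boundary case of Theorem~1's picture.

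The main obstacle I expect is purely a matter of making the asymptotics rigorous rather than formal: the integral is over the noncompact $T^*M_2$, the amplitude $w$ need not be compactly supported, and $S_\hbar$ is only a formal power series in $q$ and $\hbar$, so literally applying a stationary-phase theorem requires either (a) restricting to the formal category and interpreting everything as an identity of formal power series in $\hbar$ — which is exactly the level at which the statement is phrased ("on the phases", "in the classical limit") and which I would adopt — or (b) cutting off near the critical point with a partition of unity and showing the complementary piece is $O(\hbar^\infty)$, which needs genuine hypotheses on $S_\hbar$ and $w$ (e.g. properness, or a compactly supported amplitude with $d\f$ staying in the region where the phase has no other stationary points). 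I would state the result at the formal level, carry out (i)–(iii), and remark that (b) can be arranged under standard transversality/support assumptions, deferring those technicalities since the paper's aim is the classical-limit correspondence, not a self-contained Fourier-integral-operator calculus.
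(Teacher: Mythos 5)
Your proposal follows essentially the same route as the paper: substitute the oscillatory wave function, apply stationary phase, observe that the critical-point equations for the total phase $S_\hbar(x,q)-y^iq_i+\f(y)$ at $\hbar=0$ are exactly the defining relations $y=\lder{S}{q}$, $q=\lder{\f}{y}$ of the nonlinear pullback, and conclude that the critical value is $\Phi^*[\f]$. Your additional points (i)--(iii) on formal solvability, the nondegenerate mixed Hessian coming from the $-y^iq_i$ term, and the normalization of the amplitude are correct refinements that the paper leaves implicit, but they do not change the argument.
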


(A similar statement holds for the composition of quantum thick morphisms. The integral formula obtained in the zeroth order in $\hbar$ gives the composition law for  generating functions~\cite{tv:microformal}, but contains higher corrections in $\hbar$. This explains why one should assume a dependence on $\hbar$ in $S_{\hbar}(x,q)$ from the beginning: even if there is not, the `quantum composition law' will produce such a dependence for the generating function of the composition. Compare star-product in deformation quantization.)
\begin{proof} Consider a function on $M_2$ of the form $w(y)=e^{\frac{i}{\hbar}g(y)}$. We arrive at the integral
\begin{equation}\label{qpback}
    (\hat\Phi^* w)(x)=\frac{1}{(2\pi\hbar)^{n_2}} \int_{T^*M_2} DyDq \,\, e^{\frac{i}{\hbar}(g(y)+S_{\hbar}(x,q)-y^iq_i)}\,.
\end{equation}
By the stationary phase method, in the limit $\hbar\to 0$, the integral will be asymptotically equal to $u(x)=e^{\frac{i}{\hbar}f(f)}$ with $f(x)$ equal to the value of the phase function in~\eqref{qpback} at the stationary point (as function of $y,q$). Differentiating $g(y)+S_{\hbar}(x,q)-y^iq_i$ and setting $\hbar$ to zero, we obtain 
\begin{align*}
    \der{}{y^j}\bigl(g(y)+S_{0}(x,q)-y^iq_i)\bigr) & \equiv \der{g}{y^j}(y)-q_j=0\,,\\
    \der{}{q_j}\bigl(g(y)+S_{0}(x,q)-y^iq_i)\bigr) & \equiv \der{S_{0}}{q_j}(x,q)  -y^j =0\,,
\end{align*}
which are exactly the equations 
\begin{align*}
     q_j&=\der{g}{y^j}(y)\,,\\
     y^j&= \der{S}{q_j}(x,q) 
\end{align*}
in the definition of the nonlinear pullback $\Phi^*$. Hence $f=\Phi^*[g]$\,.
\end{proof}

Bearing in mind our initial motivation, constructing $L_{\infty}$-morphisms of algebras of functions, we expect to obtain corresponding `quantum versions' from quantum thick morphisms. We hope to elaborate this elsewhere.

%(Altogether the text cannot exceed 2.33 pages.)

%\bibliographystyle{plain}
%\bibliography{/home/ted_voronov/Desktop/Work/Local_TeX_Files/bibtex/bib/misc/geometry} %% Office UNIX
%\bibliography{geometry} %% Laptop Windows
%\end{document}
\def\cprime{$'$}  

\end{document}